\definecolor{projectlink}{RGB}{28,76,120}
\theoremstyle{plain}
\newtheorem{theorem}{Theorem}[section]
\newtheorem{proposition}[theorem]{Proposition}
\newtheorem{lemma}[theorem]{Lemma}
\newtheorem{corollary}[theorem]{Corollary}
\theoremstyle{definition}
\newtheorem{definition}[theorem]{Definition}
\newenvironment{acknowledgment}[1][Acknowledgment]{%
  \section*{#1}
  \addcontentsline{toc}{section}{#1}
  \normalsize
}{}
\begin{document}

\title{Sorting as Gradient~Flow on the Permutohedron}

\author{Jonathan Robert Landers\\
\texttt{jonathan.robert.landers@gmail.com}}
\date{}

\maketitle

\begin{abstract}
We investigate how sorting algorithms navigate the complexity of permutation space. Our main contribution is a continuous-time geometric model that casts sorting as directed motion on the permutohedron. A quadratic potential generates an ambient gradient flow that contracts toward the fixed sorted vertex $v_s=(1,2,\ldots,n)$. This dynamical picture is set against two discrete descriptions of the same problem. One follows adjacent-swap paths along the $1$-skeleton, while the other uses comparison half-spaces to refine the feasible order types. Together, they provide the combinatorial foils used to evaluate the continuous trajectory. Comparisons remove informational ambiguity, whereas the flow removes metric distance. The two mechanisms are complementary. To support this analysis, we present decision-tree arguments and local paths with $\Theta(n^2)$ behavior. We also show that the quadratic potential decreases strictly under every inversion-removing adjacent swap and formulate global half-space constraints on candidate rank maps. The maximal fixed-threshold relaxation time is determined exactly by the Euclidean diameter of the permutohedron, and the product of this relaxation time with the dimension is $\Theta(n\log n)$. Under a normalized comparison clock, this intrinsic geometric quantity has the same asymptotic scale as classical optimal comparison sorting.
\end{abstract}

\begin{center}
\small
\textbf{MSC 2020:} 68Q25, 52B11, 90C52\\[0.35ex]
\textbf{Keywords:} sorting algorithms, permutohedron, gradient flow, computational complexity\\[0.9ex]
\href{https://github.com/jonland82/sorting-as-gradient-flow}{\faGithub\enspace GitHub repository}
\hspace{2.5em}
\href{https://jonland82.github.io/sorting-as-gradient-flow/}{\faGlobeAmericas\enspace Project website}
\end{center}

\section{Introduction}

Consider the fundamental problem of sorting a list
\begin{equation}
L = [\ell_1, \ell_2, \dots, \ell_n].
\end{equation}
The elements are taken to lie in a totally ordered set so that their relative order is well-defined. Rather than viewing sorting merely as a sequence of symbolic operations, it is useful to regard it as navigation through a large structured space. Each input determines one of $n!$ possible order types, and comparisons locate the correct one. Computational models then reveal distinct notions of motion, distance, and progress within that space.

\paragraph{Notation.}
The key issue is separating an item's original identity from its current position.
We therefore keep four related objects separate. The integer \(i\) is the original index of an item, and \(\ell_i\) is the value at that index. Since the entries are assumed pairwise distinct, define the original rank map
\begin{equation}
\rho(i)=\operatorname{rank}(\ell_i),
\end{equation}
where rank \(1\) is smallest and rank \(n\) is largest. At algorithmic time \(t\), let \(p_t(k)\) be the original index of the item currently occupying array position \(k\). The rank word plotted on the permutohedron is
\begin{equation}
x_t(k)=\rho(p_t(k)).
\end{equation}
Thus \(x_t\) records the ranks currently sitting in array positions, and initially
\begin{equation}
x_0=(\rho(1),\rho(2),\ldots,\rho(n)).
\end{equation}
The rank word is used here as an ex post representation of the input order type.
Sorting moves the rank word \(x_0\) toward \(v_s=(1,2,\ldots,n)\). The sorted index
order is a different object. Here \(\sigma(k)\) is the original index of the item
occupying sorted position \(k\), so \(\sigma=\rho^{-1}\), and
\begin{equation}
[\ell_{\sigma(1)},\ell_{\sigma(2)},\ldots,\ell_{\sigma(n)}]
\end{equation}
is sorted. For example, if \(L=[3,1,2]\), then \(\rho=(3,1,2)\) but
\(\sigma=(2,3,1)\). These differ in general. Table~\ref{tab:index_trace} later
traces all of the notation through the running example \(L=[5,4,2]\).

We write \(\tau\) for a permutation of the index set \(\{1,2,\dots,n\}\), acting on \(L\) by
\begin{equation}
\tau(L) = [\,\ell_{\tau(1)},\, \ell_{\tau(2)},\, \dots,\, \ell_{\tau(n)}\,].
\end{equation}

\paragraph{Setting.}
A brute-force baseline is to enumerate all \(n!\) permutations \(\tau\in S_n\) and select \(\sigma\), the permutation that produces the sorted list. This costs \(\Theta(n\cdot n!)\), since each candidate ordering requires \(O(n)\) time to check. We use standard asymptotic notation as in \cite{Knuth1976BigO,CLRS}. Bounds are worst-case unless stated otherwise, and randomized costs are denoted by \(\mathbb{E}[T(n)]\). Decision-tree logarithms are taken in base~2, while natural logarithms are used in exact continuous-time formulas. The choice of base does not affect asymptotic orders.

Within this framework, established sorting algorithms such as MergeSort, HeapSort, and QuickSort achieve a remarkable improvement, reducing the search from factorial to log-linear scaling. By contrast, local-swap methods such as BubbleSort use only adjacent comparisons and swaps, yielding quadratic $\Theta(n^2)$ behavior. MergeSort and HeapSort guarantee $O(n \log n)$ worst-case performance, while QuickSort attains the same bound in expectation. Each exploits regularities in the input, effectively pruning large regions of permutation space.

This striking efficiency has long inspired curiosity about the connection between organization and computational tractability. Foundational work by Knuth~\cite{Knuth}, Shannon~\cite{Shannon}, and others, along with recent combinatorial and geometric perspectives~\cite{blelloch2023, harris2024, Ziegler}, reflects ongoing efforts to understand how structure constrains and organizes computation.

\paragraph{Main contribution.}
We place three views of sorting in a common permutohedral state space. Adjacent swaps traverse edges, comparison half-spaces eliminate candidate rank maps, and a quadratic potential generates an exactly solvable ambient gradient flow toward the sorted vertex. The potential decreases strictly under every inversion-removing adjacent swap. For the continuous model, Theorem~\ref{thm:contraction} gives the straight-line contraction, Proposition~\ref{prop:maximal_relaxation} determines the exact maximal fixed-threshold relaxation time from the diameter of \(\mathcal{P}_n\), and Theorem~\ref{thm:dimension_relaxation} proves the intrinsic factorization \(\dim(\mathcal{P}_n)T_\varepsilon^{\max}(n)=\Theta(n\log n)\) for fixed \(\varepsilon>0\). Corollary~\ref{cor:normalized_clock} compares this independently proved geometric quantity with the normalized classical comparison scale. The decision-tree argument remains the rigorous computational source of the comparison lower bound; the flow supplies a geometric correspondence at the same asymptotic scale.

We begin by establishing the decision-tree complexity scale, then reinterpret comparisons as linear constraints acting on the permutohedron, and finally introduce a gradient-flow model that renders geometric contraction concrete.

\section{Information-Theoretic Decision-Tree Lower Bound}

The information-theoretic limit for comparison sorting follows by modeling the
process as a binary decision tree, whose leaves distinguish the possible
strict order types, equivalently the required output permutations.

\begin{lemma}[Decision-Tree Lower Bound]
  Let $T$ be a binary decision tree that correctly sorts $n$ elements using comparisons. Then the height $h$ of $T$ satisfies
  \begin{equation}
  h \ge \log_2(n!).
  \end{equation}
\end{lemma}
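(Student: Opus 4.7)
The plan is to carry out the classical information-theoretic counting argument on the decision tree $T$, with the only nontrivial step being the justification that distinct input permutations are routed to distinct leaves. The argument has three movements: set up the tree model, prove an injectivity statement, and apply a pigeonhole/counting bound on binary trees.

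First, I would make the model precise. Each internal node of $T$ represents a comparison $l_i \lessgtr l_j$ between two input positions, producing exactly two children according to the two possible outcomes. Each leaf is labeled with the permutation the algorithm outputs when the computation reaches it. The height $h$ is the length of the longest root-to-leaf path, so every input forces at most $h$ comparisons and terminates at exactly one leaf determined deterministically by the sequence of outcomes encountered.

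Next, I would establish the key injectivity claim: the $n!$ possible relative orderings of a generic distinct-valued input each produce a different leaf. For a fixed input, the path through $T$ is determined entirely by comparison outcomes, which in turn depend only on the relative order of the entries. If two distinct orderings $\pi_1 \ne \pi_2$ landed at the same leaf, the algorithm would apply the same output permutation to both; but the rearrangement needed to sort $\pi_1$ differs from the one needed to sort $\pi_2$, so at least one output would fail to be sorted, contradicting correctness of $T$. Hence the map from $S_n$ to the leaves of $T$ is injective, and $T$ must have at least $n!$ leaves.

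Finally, I would invoke the elementary fact that a binary tree of height $h$ has at most $2^h$ leaves. Combining with the previous step yields $2^h \ge n!$, and taking $\log_2$ gives $h \ge \log_2(n!)$, as desired. The only real obstacle is the injectivity step; the counting bound and the model set-up are routine. It would then be natural to append a brief remark that Stirling's estimate $\log_2(n!) = n\log_2 n - n\log_2 e + O(\log n)$ converts this into the $\Omega(n \log n)$ worst-case lower bound that the rest of the paper will recover via the gradient-flow perspective.
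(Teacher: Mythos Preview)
Your proposal is correct and follows essentially the same approach as the paper: both argue that the $n!$ input orderings must reach distinct leaves, bound the leaf count by $2^h$, and conclude via Stirling's approximation. Your version is in fact more careful in justifying the injectivity step, which the paper simply asserts.
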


\begin{proof}
  A deterministic comparison sort must distinguish all $n!$ strict input orderings,
  and distinct order types cannot share a leaf, which prescribes a single output
  permutation. Thus a tree of height $h$ must satisfy
  \begin{equation}
  2^h \ge n!.
  \end{equation}
  Taking logarithms and using Stirling's approximation gives
  \begin{equation}
  h\ge\log_2(n!)=n\log_2 n-O(n)=\Omega(n\log n).
  \end{equation}
  Classical comparison sorts attain the matching upper bound. \qedhere
\end{proof}

This information-theoretic picture also has an enumerative counterpart. Harris,
Kretschmann, and Martínez Mori~\cite{harris2024} have shown
that the total number of comparisons made by QuickSort over all \(n!\) input
orderings coincides with the number of parking-preference lists admitting exactly
\(n-1\) lucky cars. This identity further underscores how comparison structure
governs the complexity of sorting.

Figure~\ref{fig:decision_tree} illustrates the decision tree for sorting three
elements, instantiated with the input
$L = (\ell_1 = 5, \ell_2 = 4, \ell_3 = 2)$. The leaves are labeled by candidate
sorting permutations of the original indices, with the reordered list shown for each
outcome. A highlighted path records the three comparisons performed on this
input and terminates at \(\sigma=(3,2,1)\),
which yields the sorted output $[2,4,5]$. Thus the example realizes
\(\lceil\log_2(3!)\rceil=3\), in agreement with the lower bound. Knuth's seminal
analysis established this information-theoretic limit for sorting algorithms
\cite{Knuth}.

\begin{figure}[!htb]
    \centering
    \includegraphics[width=0.9\linewidth,alt={Decision tree for sorting three elements, with a highlighted path leading to the sorted output.}]{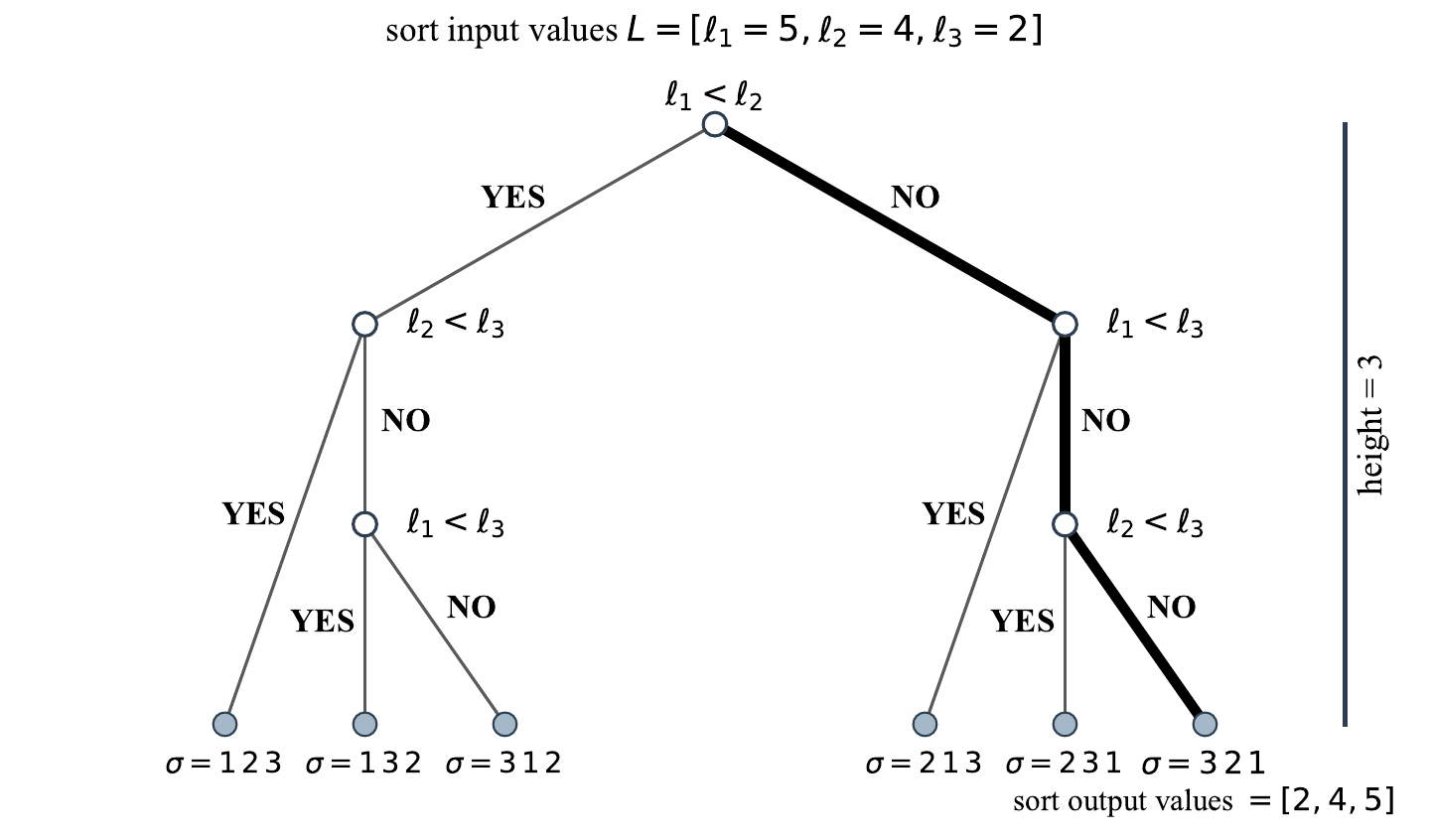}
    \caption{Decision tree for sorting
    \(L=(\ell_1=5,\ell_2=4,\ell_3=2)\). Internal nodes are comparisons and leaves are
    candidate sorting permutations. The highlighted path identifies
    \(\sigma=(3,2,1)\), yielding \([2,4,5]\) in three comparisons, matching
    \(\lceil\log_2(3!)\rceil=3\). Later permutohedron figures use rank-word coordinates
    rather than these index-order leaf labels.}
    \label{fig:decision_tree}
\end{figure}

The decision-tree model fixes the comparison complexity, but it does not explain the
geometry of the space being carved. This distinction matters. Algorithms restricted
to local adjacent comparisons, such as BubbleSort, resolve one inversion at a time
and have worst-case cost \(\Theta(n^2)\). Arbitrary comparisons can approach the
balanced \(\Theta(n\log n)\) bound. In the next section we express this contrast
on the permutohedron by interpreting comparison outcomes as linear inequalities.
Sorting then appears not only as logical branching but as structured
contraction of the candidate set until a single rank map remains.

\section{Geometric Perspective on the Permutohedron and Half-Space Constraints}

The same sorting process now admits two geometric readings on the permutohedron.
Adjacent swaps trace paths along its edges and produce the familiar $\Theta(n^2)$
behavior of local sorting algorithms. Comparisons act differently. As linear
inequalities, they remove incompatible order types from consideration. Thus one
geometry describes local motion, while the other captures information refinement
on the way to the optimal $\Theta(n\log n)$ comparison bound.

\begin{definition}[Permutohedron]
Let
\begin{equation}
v_s = (1, 2, \dots, n) \in \mathbb{R}^n.
\end{equation}
The permutohedron $\mathcal{P}_n$ is defined as
\begin{equation}
\mathcal{P}_n = \operatorname{conv}\{ \pi(v_s) \mid \pi \in S_n \}.
\end{equation}
Its vertex set is \(\mathcal{V}_n=\{ \pi(v_s) \mid \pi \in S_n\}\). In this paper these vertices are read as rank words. The coordinate in position \(k\) is the rank currently occupying that position.
\end{definition}

\begin{proposition}[Dimensionality and Affine Containment]
\label{prop:dimension}
The permutohedron $\mathcal{P}_n$ is an $(n-1)$-dimensional polytope contained in the affine hyperplane
\begin{equation}
H = \left\{ x \in \mathbb{R}^n \,\big|\, \sum_{i=1}^n x_i = \frac{n(n+1)}{2} \right\}.
\end{equation}
\end{proposition}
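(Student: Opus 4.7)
The plan is to split the proposition into its two assertions: containment of $\mathcal{P}_n$ in the hyperplane $H$, and the claim that $\dim \mathcal{P}_n = n-1$. Each follows from elementary linear‑algebraic observations about the vertex set $\{\pi(v) : \pi \in S_n\}$, so my strategy is to prove the containment first and then pin down the dimension by exhibiting $n-1$ affinely independent directions among the vertices.

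For the containment, I would note that for every $\pi \in S_n$ the tuple $\pi(v)$ is a rearrangement of $(1,2,\dots,n)$, so its coordinate sum equals $1+2+\cdots+n = n(n+1)/2$. Hence each vertex lies in $H$. Since $H$ is an affine hyperplane and therefore convex, the convex hull of a set of points in $H$ is again contained in $H$, giving $\mathcal{P}_n \subseteq H$. In particular, $\dim \mathcal{P}_n \le \dim H = n-1$.

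For the matching lower bound on the dimension, I would compare $v$ to the vertices obtained by applying the adjacent transpositions $\tau_i = (i\,\,i+1)$ for $i = 1,\dots,n-1$. A direct computation gives
\[
\tau_i(v) - v = e_{i+1} - e_i,
\]
where $e_1,\dots,e_n$ is the standard basis of $\mathbb{R}^n$. These $n-1$ vectors all have coordinate sum zero, and they are linearly independent: if $\sum_{i=1}^{n-1} c_i (e_{i+1}-e_i) = 0$, reading off the coefficient of $e_1$ forces $c_1 = 0$, after which the coefficient of $e_2$ forces $c_2 = 0$, and so on. Therefore the affine hull of $\{v, \tau_1(v),\dots,\tau_{n-1}(v)\} \subseteq \mathcal{P}_n$ already has dimension $n-1$, and combined with the previous containment this forces $\dim \mathcal{P}_n = n-1$ and the affine hull of $\mathcal{P}_n$ to be exactly $H$.

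There is no substantive obstacle in this argument; the proof is a short verification once the right $n-1$ direction vectors are chosen. The only point that requires a moment of care is the independence of the $e_{i+1}-e_i$, which I would handle by the triangular cascade above rather than by invoking the theory of root systems of type $A_{n-1}$, so the proof stays self‑contained and accessible.
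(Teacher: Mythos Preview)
Your proof is correct and follows the same outline as the paper's: show that every vertex lies in $H$ because permuting the entries of $v$ preserves their sum, and then conclude $\dim\mathcal{P}_n = n-1$. The paper simply asserts that ``the vertices span $H$'' without justification, whereas you actually verify this by exhibiting the $n-1$ independent directions $\tau_i(v)-v = \pm(e_i - e_{i+1})$, so your argument is more complete but not a different route.
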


\begin{proof}
For any permutation $\pi \in S_n$,
\begin{equation}
\sum_{i=1}^n (\pi(v_s))_i = \sum_{i=1}^n i = \frac{n(n+1)}{2}.
\end{equation}
Thus, every vertex of $\mathcal{P}_n$ lies in the hyperplane $H$.

To see that these vertices span $H$, note that for each adjacent transposition $(i,i+1)$,
\begin{equation}
(i,i+1)(v_s) - v_s = \pm(e_{i+1}-e_i),
\end{equation}
with the sign depending only on the convention for writing the permutation action.
The vectors $e_{i+1} - e_i$ for $i = 1,\dots,n-1$ form a basis of the subspace
\begin{equation}
\left\{ x \in \mathbb{R}^n : \sum_{i=1}^n x_i = 0 \right\}.
\end{equation}
Since adjacent transpositions generate all permutations, it follows that the set
$\{\pi(v_s) : \pi \in S_n\}$ affinely spans the entire hyperplane $H$.

Because $H$ has dimension $n-1$, the permutohedron $\mathcal{P}_n$ is an $(n-1)$-dimensional polytope. \qedhere
\end{proof}

\subsection{Adjacent-Swap Walks and Their Geometric Interpretation}
Before introducing the linear-constraint formulation, it is helpful to visualize how
rank-word vertices relate to one another on the permutohedron $\mathcal{P}_n$. Because
adjacent transpositions generate $S_n$, they appear naturally as edges of the
$1$-skeleton (Cayley graph) of $\mathcal{P}_n$. Traversing these edges resolves one
inversion at a time. It gives a literal geometric realization of sorting as motion
on a polytope and mirrors the stepwise progression in the decision-tree example for
the input $L=(\ell_1,\ell_2,\ell_3)=(5,4,2)$.
Shortest paths from the identity to the reverse permutation in this graph are the
sorting networks studied by Angel et al.~\cite{angel2007}.
The walk diagrams use an inverse labeling of the standard permutohedron so that
exchanging neighboring array positions corresponds to traversing an edge. This
differs from direct coordinate labeling by permutation inversion.

Figure~\ref{fig:two_panel_permutohedron_walks} depicts this adjacent-swap path for
$n=3$ in two complementary forms. One is a flattened $2$-dimensional schematic and the other is a full
$3$-dimensional embedding of $\mathcal{P}_3$. In both views, the highlighted trajectory
connects the reverse rank word $(3,2,1)$ to the sorted rank word $(1,2,3)$ by
successive neighbor exchanges. These local moves provide a geometric model for the
$\Theta(n^{2})$ behavior of adjacent-swap sorting algorithms.
The low-dimensional example is used only because $\mathcal{P}_3$ can be drawn directly.
The definitions, constraints, and flow estimates below are stated for arbitrary $n$.

An edge path records how the current arrangement changes. A comparison history
instead determines which original rank maps remain possible. When its
outcomes are recorded as inequalities between candidate ranks of the original items,
they define half-spaces whose intersection with $\mathcal{V}_n$ narrows the search
space. Proposition~\ref{prop:geometric_constraints} connects this refinement to the
information-theoretic lower bound of $\Omega(n\log n)$ comparisons.

\begin{figure}[!htbp]
    \centering

    \begin{minipage}[t]{0.49\linewidth}
        \centering
        \includegraphics[width=\linewidth,alt={Two-dimensional schematic of an adjacent-swap path on the permutohedron.}]{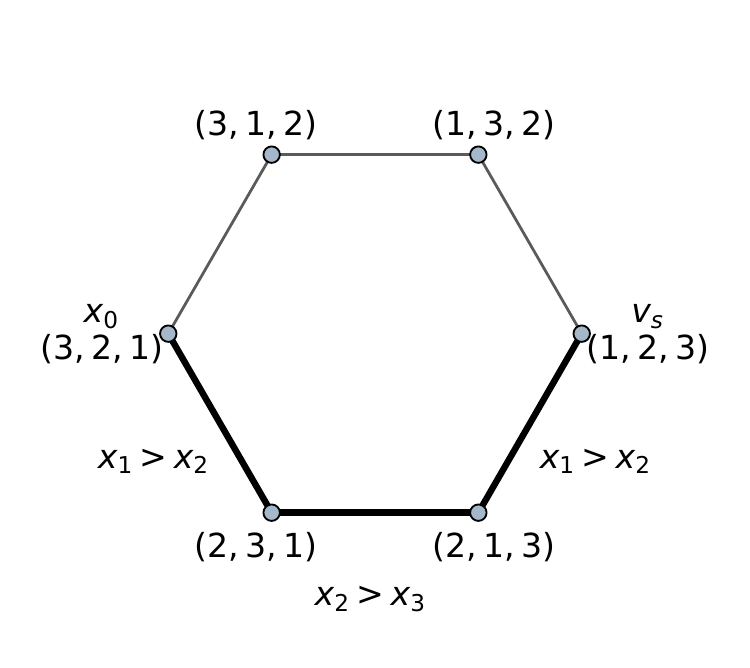}
    \end{minipage}
    \hfill
    \begin{minipage}[t]{0.49\linewidth}
        \centering
        \includegraphics[width=\linewidth,alt={Three-dimensional permutohedron with the adjacent-swap rank-word path from 321 to 123.}]{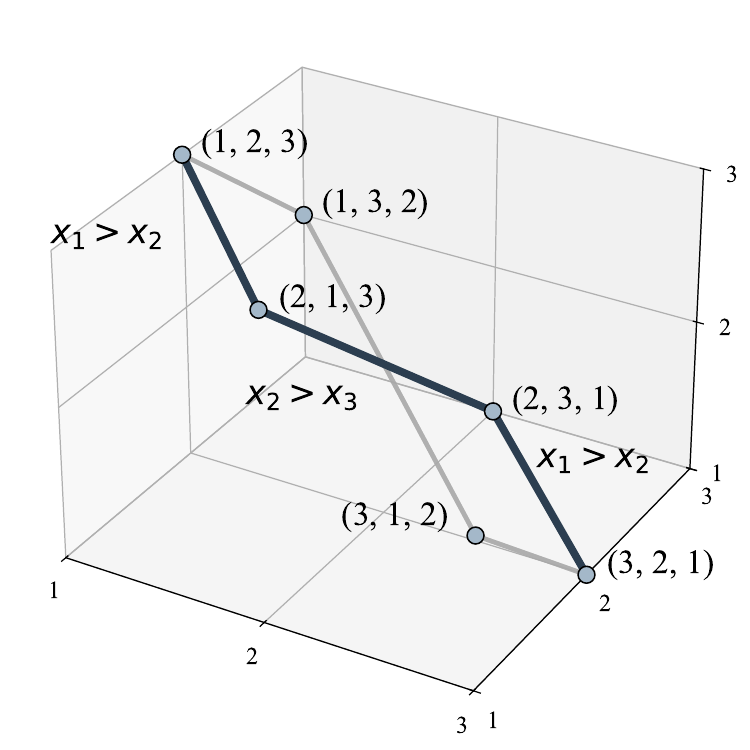}
    \end{minipage}

    \caption{Adjacent-position-swap motion on an inverse-labeled permutohedron \(\mathcal{P}_3\).
    \textbf{Left panel.} A 2D schematic of local traversal along the \(1\)-skeleton.
    \textbf{Right panel.} The same array-rank path
    \((3,2,1)\!\to\!(2,3,1)\!\to\!(2,1,3)\!\to\!(1,2,3)\)
    in a 3D embedding. Each step exchanges neighboring positions, illustrating the
    local \(\Theta(n^2)\) geometry of adjacent-swap sorting.}

    \label{fig:two_panel_permutohedron_walks}
\end{figure}

\subsection{Linear Constraints and Information Refinement}
Comparisons at different stages must be placed in a common coordinate frame. A
comparison is made between two current positions, but the information it reveals
concerns the two original items occupying those positions. To combine all comparison
outcomes into one fixed feasible set, the inequalities must therefore be recorded in
original-item coordinates.

Let
\begin{equation}
r=(r_1,\ldots,r_n)\in\mathcal{V}_n
\end{equation}
be a candidate original-index rank map, where \(r_i\) is the candidate rank of the
original item \(\ell_i\). If at time \(t\) a comparison between positions \(a\) and
\(b\) has the ``less than'' outcome, it records
\begin{equation}
r_{p_t(a)}<r_{p_t(b)}.
\end{equation}
In plain language, if position \(a\) currently contains item \(i\) and position \(b\)
contains item \(j\), the comparison records \(r_i<r_j\).

For a fixed comparison history \(\mathcal{C}\), let \(\mathrel{\triangleleft_j}\) be
\(<\) or \(>\) according to the outcome of comparison \(j\), made at time \(t_j\)
between positions \(a_j\) and \(b_j\). Its feasible candidate rank maps are
\begin{equation}
\mathcal{R}_n(\mathcal{C})
=
\left\{
r\in\mathcal{V}_n:
r_{p_{t_j}(a_j)}
\mathrel{\triangleleft_j}
r_{p_{t_j}(b_j)}
\text{ for every comparison }j
\right\}.
\end{equation}

\begin{table}[!htb]
  \centering
  \small
  \setlength{\tabcolsep}{4.5pt}
  \renewcommand{\arraystretch}{1.12}
  \begin{tabular}{@{}cccccc@{}}
    \toprule
    \(t\) & \(p_t\) & Current list & \(x_t=\rho\circ p_t\) & New constraint
      & \(\lvert\mathcal{R}_3(\mathcal{C}_t)\rvert\) \\
    \midrule
    \(0\) & \((1,2,3)\) & \([5,4,2]\) & \((3,2,1)\) & none & \(6\) \\
    \(1\) & \((2,1,3)\) & \([4,5,2]\) & \((2,3,1)\) & \(r_1>r_2\) & \(3\) \\
    \(2\) & \((2,3,1)\) & \([4,2,5]\) & \((2,1,3)\) & \(r_1>r_3\) & \(2\) \\
    \(3\) & \((3,2,1)\) & \([2,4,5]\) & \((1,2,3)\) & \(r_2>r_3\) & \(1\) \\
    \bottomrule
  \end{tabular}
  \caption{Index bookkeeping for adjacent-swap sorting of \(L=[5,4,2]\).
  Here \(\rho=\sigma=(3,2,1)\) only because the reverse permutation is
  self-inverse. In general \(\sigma=\rho^{-1}\) need not equal \(\rho\).}
  \label{tab:index_trace}
\end{table}
\FloatBarrier

Table~\ref{tab:index_trace} places the moving and fixed-coordinate descriptions
side by side. Row \(t\) records the state after the first \(t\) comparisons and
any resulting swaps, and \(\mathcal{C}_t\) denotes the outcomes accumulated by
that time. The left columns track motion. Here \(p_t\) records which original items
occupy the array positions, and \(x_t=\rho\circ p_t\) is the derived rank-word
coordinate. The right columns track information in fixed original-item
coordinates. Thus the rank word moves \(321\to231\to213\to123\) while the
feasible candidate set shrinks \(6\to3\to2\to1\). In the final row,
\(\mathcal{R}_3(\mathcal{C}_3)=\{\rho\}\) and \(x_3=v_s\). The general statement
is as follows.

\begin{proposition}[Comparison Constraints and Candidate Rank Maps]
\label{prop:geometric_constraints}
For an input with original rank map \(\rho\), a complete comparison history
\(\mathcal{C}\) at a decision-tree leaf identifies that map.
\begin{equation}
\mathcal{R}_n(\mathcal{C})=\{\rho\}.
\end{equation}
Consequently, a correct comparison decision tree has worst-case depth at least
\(\log_2(n!)=\Omega(n\log n)\), while optimal comparison sorts supply the matching
\(O(n\log n)\) upper bound. If \(\sigma=\rho^{-1}\) is the output permutation, then
applying it converts the identified rank map to the sorted rank word.
\begin{equation}
\bigl(\rho(\sigma(1)),\ldots,\rho(\sigma(n))\bigr)
=(1,\ldots,n)=v_s.
\end{equation}
\end{proposition}

\begin{proof}
Along a fixed decision-tree path, the positions compared and the original items
occupying them are determined by the preceding outcomes and operations. A candidate
rank map \(r\) follows this path exactly when it satisfies every recorded inequality,
so \(\mathcal{R}_n(\mathcal{C})\) is precisely the set of order types compatible with
the history.

At a leaf, the algorithm has fixed one output permutation. Correctness requires that
this permutation sort every candidate reaching that leaf, but an original-index rank
map \(r\) is sorted by the unique permutation \(r^{-1}\). Thus only one candidate rank
map can reach a correct leaf. Since the true map \(\rho\) follows the recorded path,
\(\mathcal{R}_n(\mathcal{C})=\{\rho\}\), and applying \(\sigma=\rho^{-1}\) gives the
displayed sorted rank word.

In the decision-tree model, each comparison has only two possible outcomes. Thus, at best,
it can distinguish between two classes of remaining candidate rank words. For the
purpose of proving a lower bound, we may therefore use the optimally balanced decision-tree
ideal. In this ideal, each comparison splits the remaining feasible candidates as evenly as possible.
This is the most optimistic binary refinement available to any comparison-based algorithm.
Such a cut need not be exactly balanced, and arbitrary comparison hyperplanes
may fail to halve either the vertices or the volume of the permutohedron.

In this ideal halving model, after
$k$ comparisons,
\begin{equation}
\frac{n!}{2^k} \le 1 \quad \Rightarrow \quad k \ge \log_2(n!) = \Omega(n\log n).
\end{equation}
Classical algorithms achieve $O(n\log n)$ comparisons. Hence their constraints cut
down the candidate original-index rank maps to exactly one feasible vertex, supplying
the matching upper bound.
\qedhere
\end{proof}

Table~\ref{tab:index_trace} records a literal execution. Figure~\ref{fig:permutohedron_sequence}
instead gives a canonical fixed-coordinate illustration of the decision-tree halving
heuristic. The inequalities \(x_1<x_2\)
and \(x_2<x_3\) isolate the increasing chamber and hence the vertex \((1,2,3)\).
For a concrete execution, comparison outcomes are first
recorded in original-item coordinates and are then converted to sorted-output coordinates.
Under this approximation, the panels represent exponential
reduction in the number of feasible permutations. An actual comparison may not split
the remaining vertices or polytope volume evenly.

\begin{figure}[!htbp]
    \centering
    \includegraphics[width=\linewidth,alt={Canonical permutohedron panels in which the fixed-coordinate cuts x1 less than x2 and x2 less than x3 isolate the increasing rank word.}]{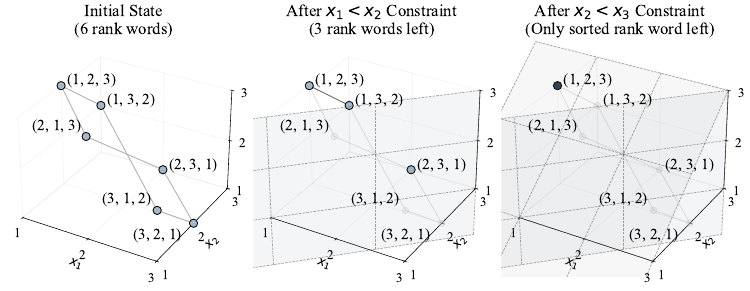}
    \caption{Canonical comparison half-spaces on \(\mathcal{P}_3\). The cuts
    \(x_1<x_2\) and \(x_2<x_3\) isolate the increasing rank word. The picture
    illustrates chamber refinement rather than the literal adaptive comparison path
    of a particular input.}
    \label{fig:permutohedron_sequence}
\end{figure}

Several nearby approaches also connect sorting with geometry. Blelloch and Dobson
relate comparison sorting to offline binary search trees through planar permutation
representations and the log-interleave bound~\cite{blelloch2023}. Lee et al.\ study
stack-sorting simplices, subpolytopes encoding the geometry of a constrained sorting
process~\cite{lee2025}. Continuous relaxations include sorting-network formulations
of permutation problems~\cite{lim2014}, entropy-regularized optimal-transport
sorting~\cite{cuturi2019}, and differentiable sorting by projection onto the
permutahedron~\cite{blondel2020}. These approaches provide geometric models of
particular sorting structures and optimization schemes.

The three viewpoints now on the table use distinct states and measures of
progress. With adjacent exchanges, this is graph distance on the \(1\)-skeleton of \(\mathcal{P}_n\),
equivalently the number of inversions to be removed. With arbitrary comparisons, the
state is the set of feasible original-index rank maps, and disorder is the information
still needed to isolate one of them. Euclidean distance in the standard embedding is a
third object.

The central comparison therefore separates two geometrically different processes. First,
comparisons eliminate incompatible order types until a single permutation remains.
Second, once the target permutation is represented as a vertex, Euclidean gradient
flow gives a continuous model of contraction toward it. These mechanisms are not
identical. One is informational and the other metric. After comparisons are normalized
by \(n\), both exhibit logarithmic behavior. We now turn to the metric side of this
picture.

\section[\texorpdfstring{Continuous Gradient Flow and the Classical $n\log n$ Scale}
         {Continuous Gradient Flow and the Classical n log n Scale}]
        {Continuous Gradient Flow and the Classical $n\log n$ Scale}

Starting from a fixed target, Theorem~\ref{thm:contraction} gives the ambient gradient
flow in closed form and quantifies its contraction time.

The closest dynamical antecedent is Brockett's double-bracket flow, which sorts a
list through gradient-driven matrix diagonalization on an isospectral orbit whose
moment image is the permutohedron~\cite{brockett1991,helmke1994}. By contrast, the
model here works directly
in rank-word coordinates on the permutohedron and places fixed-target contraction
alongside adjacent-swap motion and comparison half-spaces.

\subsection{Setup and Definitions}

Let
\begin{equation}
v_s = (1,2,\dots,n) \in \mathbb{R}^n
\end{equation}
denote the vertex of the permutohedron corresponding to the sorted rank word.
For the continuous relaxation, work in the standard Euclidean embedding of the
permutohedron and define the potential
\begin{equation}
V(x) = \frac{1}{2}\|x-v_s\|_2^2 = \frac{1}{2}\sum_{i=1}^n (x_i-i)^2.
\end{equation}
This is a Spearman-type squared rank displacement. It is not the only sorting metric.
Adjacent swaps use inversion distance, whereas comparisons use the information
remaining in the set of candidate rank maps. The role of \(V\) is to
provide a smooth benchmark on \(\mathcal{P}_n\). Note that \(V(v_s)=0\) and
\(V(x) > 0\) for \(x \neq v_s\). Given an input list \(L\)
with original rank map \(\rho\), we take as initial state its rank word
\begin{equation}
x(0)=x_0=(\rho(1),\rho(2),\ldots,\rho(n)) \in \mathcal{P}_n,
\end{equation}
so that the relaxation carries this fixed geometric representative toward the
sorted vertex \(v_s\).

\begin{proposition}[Adjacent-swap descent]
\label{prop:adjacent_swap_descent}
Let \(x\in\mathcal{V}_n\), and suppose \(x_i=a>b=x_{i+1}\). If \(x'\) is obtained
by swapping these adjacent inverted ranks, then
\begin{equation}
V(x)-V(x')=a-b>0.
\end{equation}
Consequently, \(V\) is a strict Lyapunov function for adjacent
inversion-removing sorting. If \(\operatorname{Inv}(x)\) denotes the inversion
number of \(x\), then
\begin{equation}
\operatorname{Inv}(x)\le V(x)\le (n-1)\operatorname{Inv}(x).
\end{equation}
\end{proposition}

\begin{proof}
Only coordinates \(i\) and \(i+1\) change, so
\begin{align}
V(x)-V(x')
&=\frac12\left[(a-i)^2+(b-i-1)^2
 -(b-i)^2-(a-i-1)^2\right] \\
&=a-b.
\end{align}
Every adjacent inversion-removing swap decreases the inversion number by exactly
one. Sorting \(x\) therefore takes \(\operatorname{Inv}(x)\) such swaps, and each
decrease in \(V\) lies between \(1\) and \(n-1\). Summing the decreases along the
path to \(v_s\), where \(V(v_s)=0\), gives the stated bounds.
\end{proof}

Thus the same potential measures descent along the discrete adjacent-swap walk and
the continuous flow introduced next. The two-sided bound parallels the
Diaconis--Graham inequality between the Spearman footrule and the inversion
count~\cite{diaconis1977}.

\subsection{Exact Gradient-Flow Contraction}

\begin{theorem}[Ambient Euclidean Contraction]
\label{thm:contraction}
Let \(x(0)\in \mathbb{R}^n\) be the initial rank word corresponding to an unsorted
input list and let \(v_s=(1,2,\dots,n)\) be the sorted vertex of the permutohedron \(\mathcal{P}_n\).
Define
\begin{equation}
D_0=\|x(0)-v_s\|_2^2.
\end{equation}
Under the gradient flow
\begin{equation}
\dot{x}(t)=-\nabla V(x(t))=v_s-x(t),
\end{equation}
the following statements hold.
\begin{enumerate}
    \item The solution satisfies
    \begin{equation}
    x(t)=v_s+\bigl(x(0)-v_s\bigr)e^{-t},
    \end{equation}
    \begin{equation}
    \|x(t)-v_s\|_2=\|x(0)-v_s\|_2e^{-t},
    \end{equation}
    and hence
    \begin{equation}
    \|x(t)-v_s\|_2^2=\|x(0)-v_s\|_2^2e^{-2t}.
    \end{equation}
    \item For a threshold \(0<\varepsilon<\sqrt{D_0}\), the condition
    \(\|x(t)-v_s\|_2\le \varepsilon\) holds once
    \begin{equation}
    t\ge \frac{1}{2}\ln\Bigl(\frac{D_0}{\varepsilon^2}\Bigr).
    \end{equation}
    \item For the reverse rank word \(x(0)=(n,n-1,\dots,1)\),
    \begin{equation}
    D_0=\sum_{i=1}^{n}(n+1-2i)^2=\frac{n(n^2-1)}{3}=\Theta(n^3),
    \end{equation}
    and therefore
    \begin{equation}
    V(x(0))=\frac{D_0}{2}=\frac{n(n^2-1)}{6}.
    \end{equation}
    For fixed \(\varepsilon>0\), the threshold time is \(t=\Theta(\ln n)\).
\end{enumerate}
\end{theorem}

\begin{proof}
We begin by deriving the gradient flow dynamics directly from the potential \(V\).
A straightforward computation shows that
\begin{equation}
\nabla V(x) = \bigl(x_1-1,\; x_2-2,\; \dots,\; x_n-n\bigr).
\end{equation}
Hence, the gradient flow is given by
\begin{equation}
\dot{x}(t) = -\nabla V(x(t)) =
\bigl(1-x_1(t),\; 2-x_2(t),\; \dots,\; n-x_n(t)\bigr).
\end{equation}
Because the system decouples, each coordinate \(x_i\) satisfies the first-order
linear differential equation
\begin{equation}
\dot{x}_i(t) = i-x_i(t),
\end{equation}
with initial condition \(x_i(0)\). Its solution is
\begin{equation}
x_i(t) = i+\bigl(x_i(0)-i\bigr)e^{-t}.
\end{equation}
Collecting these coordinate solutions, we obtain
\begin{equation}
x(t)=v_s+\bigl(x(0)-v_s\bigr)e^{-t},
\end{equation}
so that
\begin{equation}
\|x(t)-v_s\|_2^2=\|x(0)-v_s\|_2^2e^{-2t}.
\end{equation}
This establishes statement~1.

For statement~2, define the \emph{initial disorder} \(D_0=\|x(0)-v_s\|_2^2\). The
system is effectively sorted when
\begin{equation}
\|x(t)-v_s\|_2^2 = D_0e^{-2t}\le \varepsilon^2.
\end{equation}
Taking logarithms yields
\begin{equation}
t\ge \frac{1}{2}\ln\Bigl(\frac{D_0}{\varepsilon^2}\Bigr).
\end{equation}
\textbf{Reverse-permutation radius.} If \(x(0)=(n,n-1,\dots,1)\) then
\(x_i(0)-i=n+1-2i\). The standard centered-square identity gives
\begin{equation}
D_0=\sum_{i=1}^{n}\bigl(n+1-2i\bigr)^2
     = \frac{n(n^2-1)}{3}
     = \Theta(n^3).
\end{equation}
Consequently the initial potential at the reverse rank word is
\begin{equation}
V(x(0))=\frac12D_0=\frac{n(n^2-1)}{6}.
\end{equation}
Hence \(t=\Theta(\ln n)\).
\end{proof}

\subsection{Maximal Relaxation Time}

For \(x_0\in\mathcal{P}_n\), define the time to a Euclidean threshold
\begin{equation}
T_\varepsilon(x_0)=
\inf\bigl\{t\geq0:\|x(t)-v_s\|_2\leq\varepsilon\bigr\},
\end{equation}
and define the worst-case relaxation time by
\begin{equation}
T_\varepsilon^{\max}(n)=\sup_{x_0\in\mathcal{P}_n}T_\varepsilon(x_0).
\end{equation}
Theorem~\ref{thm:contraction} gives, including the case in which the initial point
is already within the threshold,
\begin{equation}
T_\varepsilon(x_0)=
\begin{cases}
0, & \|x_0-v_s\|_2\leq\varepsilon,\\[0.25ex]
\displaystyle\ln\dfrac{\|x_0-v_s\|_2}{\varepsilon},
& \|x_0-v_s\|_2>\varepsilon.
\end{cases}
\end{equation}

\begin{proposition}[Exact Maximal Relaxation Time]
\label{prop:maximal_relaxation}
Let \(0<\varepsilon<\operatorname{diam}(\mathcal{P}_n)\). Under the flow of
Theorem~\ref{thm:contraction},
\begin{equation}
T_\varepsilon^{\max}(n)
=\ln\frac{\operatorname{diam}(\mathcal{P}_n)}{\varepsilon}
=\frac12\ln\left(\frac{n(n^2-1)}{3\varepsilon^2}\right).
\end{equation}
In particular, for every fixed \(\varepsilon>0\), as \(n\to\infty\),
\begin{equation}
T_\varepsilon^{\max}(n)=\frac32\ln n+O(1).
\end{equation}
\end{proposition}

\begin{proof}
The unsquared-distance identity in Theorem~\ref{thm:contraction} gives
\begin{equation}
\|x(t)-v_s\|_2=\|x_0-v_s\|_2e^{-t}.
\end{equation}
Thus, whenever \(\|x_0-v_s\|_2>\varepsilon\),
\begin{equation}
T_\varepsilon(x_0)=\ln\frac{\|x_0-v_s\|_2}{\varepsilon}.
\end{equation}
It remains to maximize the initial distance. The function
\(x\mapsto\|x-v_s\|_2^2\) is convex, so its maximum over the compact polytope
\(\mathcal{P}_n\) is attained at a vertex. Indeed, writing a point as a convex
combination of vertices bounds its value by the largest value at those vertices.
For a vertex \(\pi(v_s)\),
\begin{equation}
\|\pi(v_s)-v_s\|_2^2
=2\sum_{i=1}^n i^2-2\sum_{i=1}^n i\,\pi(i).
\end{equation}
By the rearrangement inequality, the second sum is minimized by
\(\pi(i)=n+1-i\). Hence the reverse vertex is farthest from \(v_s\), at squared
distance
\begin{equation}
\sum_{i=1}^n(n+1-2i)^2=\frac{n(n^2-1)}{3}.
\end{equation}
The Euclidean diameter of a polytope is attained by two vertices. Coordinate
permutations act transitively and isometrically on the vertices of
\(\mathcal{P}_n\), so any such pair may be carried to \(v_s\) and another vertex.
Consequently,
\begin{equation}
\operatorname{diam}(\mathcal{P}_n)
=\left\|v_s-(n,n-1,\dots,1)\right\|_2
=\sqrt{\frac{n(n^2-1)}{3}}.
\end{equation}
Substitution proves the exact formula, and its fixed-\(\varepsilon\) expansion gives
the final assertion.
\end{proof}

\subsection{Geometric Factorization of the Classical Scale}

\begin{theorem}[Dimension--Relaxation Factorization]
\label{thm:dimension_relaxation}
For every fixed threshold \(\varepsilon>0\) and all sufficiently large \(n\),
\begin{equation}
\dim(\mathcal{P}_n)T_\varepsilon^{\max}(n)
=\frac{n-1}{2}\ln\left(\frac{n(n^2-1)}{3\varepsilon^2}\right).
\end{equation}
Consequently,
\begin{equation}
\dim(\mathcal{P}_n)T_\varepsilon^{\max}(n)
=\frac32n\ln n+O(n)=\Theta(n\log n).
\end{equation}
Equivalently,
\begin{equation}
\dim(\mathcal{P}_n)\ln\frac{\operatorname{diam}(\mathcal{P}_n)}{\varepsilon}
=\Theta(n\log n).
\end{equation}
\end{theorem}

\begin{proof}
Proposition~\ref{prop:dimension} gives \(\dim(\mathcal{P}_n)=n-1\), and
Proposition~\ref{prop:maximal_relaxation} gives the exact relaxation time. Their
product is the displayed identity. For fixed \(\varepsilon>0\),
\begin{equation}
\ln\left(\frac{n(n^2-1)}{3\varepsilon^2}\right)
=\ln\left(\frac{n^3}{3\varepsilon^2}
\left(1-\frac1{n^2}\right)\right)
=3\ln n+O(1).
\end{equation}
Multiplying by \((n-1)/2\) yields \(\frac32n\ln n+O(n)\).
\end{proof}

This theorem is an intrinsic geometric statement about the standard permutohedron
and its quadratic relaxation, proved entirely from its dimension, Euclidean diameter,
and exact contraction law. Corollary~\ref{cor:normalized_clock} next places this
geometric quantity alongside normalized optimal comparison cost.

Let \(k(n)\) denote the optimal worst-case number of comparisons required to sort
\(n\) distinct elements, and define the normalized comparison clock
\begin{equation}
s(n)=\frac{k(n)}{n}.
\end{equation}

\begin{corollary}[Normalized Comparison Clock]
\label{cor:normalized_clock}
The classical comparison bounds imply
\begin{equation}
s(n)=\Theta(\log n).
\end{equation}
Hence, for every fixed \(\varepsilon>0\),
\begin{equation}
s(n)=\Theta\bigl(T_\varepsilon^{\max}(n)\bigr).
\end{equation}
\end{corollary}

\begin{proof}
The classical lower and upper bounds give \(k(n)=\Theta(n\log n)\). Dividing by
\(n\) gives \(s(n)=\Theta(\log n)\), while
Proposition~\ref{prop:maximal_relaxation} gives
\(T_\varepsilon^{\max}(n)=\Theta(\log n)\) for fixed \(\varepsilon>0\).
\end{proof}

This comparison concerns asymptotic scales. It does not assert that one unit of flow
time is literally equivalent to one sweep of comparisons, and it does not replace
the decision-tree proof of the comparison lower bound.

\subsection{Constraint Geometry and the Ambient Flow}

The exact flow in Theorem~\ref{thm:contraction} is an ambient Euclidean benchmark.
It is already compatible with the convex geometry of the permutohedron.
\(\mathcal{P}_n\) lies in the affine hyperplane
\begin{equation}
\sum_{i=1}^{n}x_i=\frac{n(n+1)}{2},
\end{equation}
and both \(x(0)\) and \(v_s\) lie in \(\mathcal{P}_n\). Since \(\mathcal{P}_n\) is
convex, the straight segment
\begin{equation}
x(t)=v_s+\bigl(x(0)-v_s\bigr)e^{-t}
\end{equation}
remains in \(\mathcal{P}_n\) for all \(t\ge0\).

It is important not to identify comparison hyperplanes with the boundary facets of
\(\mathcal{P}_n\). The hyperplanes
\begin{equation}
x_i=x_j
\end{equation}
are braid-arrangement walls in the ambient affine space. They slice
\(\mathcal{P}_n\). In general, they are not facets of the permutohedron. The
facets of the standard permutohedron are instead described by subset-sum inequalities~\cite{Ziegler}
of the form
\begin{equation}
\sum_{i\in S}x_i \ge \frac{|S|(|S|+1)}{2},
\qquad
\varnothing\ne S\subsetneq \{1,\dots,n\},
\end{equation}
together with the fixed total-sum equality. The picture thus makes a subtlety
visible. Lying inside \(\mathcal{P}_n\) does not by itself force any projection of
the flow.

This distinction now becomes concrete. Comparison-feasible sets describe information,
whereas constraints on a metric trajectory would restrict motion.

\begin{proposition}[Projection is inactive when the target is feasible]
\label{prop:inactive_projection}
Let \(K\subseteq\mathcal{P}_n\) be a closed convex set containing \(v_s\). For every
\(x\in K\),
\begin{equation}
v_s-x\in T_K(x).
\end{equation}
Here \(T_K(x)\) is the tangent cone of \(K\) at \(x\). Therefore its Euclidean
projection satisfies
\begin{equation}
\Pi_{T_K(x)}(v_s-x)=v_s-x,
\end{equation}
and the projected flow agrees with the ambient flow.
\begin{equation}
\dot{x}=v_s-x.
\end{equation}
\end{proposition}

\begin{proof}
Since \(x,v_s\in K\) and \(K\) is convex,
\begin{equation}
x+\lambda(v_s-x)\in K
\qquad (0\leq\lambda\leq1).
\end{equation}
Thus \(v_s-x\) is a feasible tangent direction at \(x\), so projecting it onto
the tangent cone leaves it unchanged~\cite{rockafellar1970}.
\end{proof}

If a convex feasible region contains \(v_s\), projection is therefore inactive.
If instead the region excludes \(v_s\), constrained descent approaches the unique
point in the region nearest to \(v_s\), not \(v_s\) itself. The two objects thus play
distinct and complementary roles. Comparison half-spaces describe information
reduction, and the Euclidean flow describes metric contraction. Comparisons remove
informational ambiguity. Gradient flow removes metric distance.

\subsection{Remarks}

Theorem~\ref{thm:contraction} gives exact exponential decay of Euclidean rank
displacement under \(V(x)=\tfrac12\|x-v_s\|_2^2\). Proposition~\ref{prop:maximal_relaxation}
and Theorem~\ref{thm:dimension_relaxation} extract its exact worst-case time and
dimension--relaxation factorization, while Corollary~\ref{cor:normalized_clock}
compares that geometric scale with normalized optimal comparison cost. More broadly,
a descent formulation can clarify a process by
identifying an energy that decreases along its path. In this respect the
viewpoint echoes the insight of Jordan,
Kinderlehrer, and Otto, who reinterpreted the Fokker--Planck equation as a
gradient flow minimizing a free energy functional under the Wasserstein metric
\cite{jko1998}.

Figure~\ref{fig:gradient_flow_permutohedron} brings the discrete and continuous
descriptions together. Vertices represent rank words, and the gradient flow traces
a smooth curve inside the convex permutohedron toward the sorted vertex. The left
panel contrasts this motion with the adjacent--swap path. Local algorithms advance
by short moves along edges, while the continuous relaxation cuts straight across
the polytope. This path is a genuine cross-section, not a graph-geodesic
on the \(1\)-skeleton.

In scalar form, the right panel presents the dynamics. The distance to the
sorted state contracts exponentially as \(r(t)=\|x(t)-v_s\|_{2}\) decreases.
Viewed in \((r,t,V)\)-space, the trajectory resembles motion through a curved
landscape. The system rolls downhill under the potential, like a particle
descending a gravitational well toward equilibrium. In both panels the example
is \(L=[5,4,2]\), used earlier. The rank word begins at
\((3,2,1)\) and flows toward \((1,2,3)\), corresponding to the sorted output
\([2,4,5]\). The initial radius is
\(r_0=\|(3,2,1)-(1,2,3)\|_2=\sqrt{8}\), and the dissipation at \(t=0\) is
\(dV/dt=-\|\nabla V(x(0))\|_2^2=-8\).

Together, the panels keep the two roles visible. The left contrasts discrete and
continuous paths, while the right isolates exponential metric convergence. For an
initial radius above the threshold, the displayed time is
\(T_\varepsilon(x_0)=\ln(\|x_0-v_s\|_2/\varepsilon)\). Its logarithmic order agrees
with normalized optimal comparison cost.

\begin{figure}[!htbp]
    \centering

    \begin{minipage}[t]{0.49\linewidth}
        \centering
        \includegraphics[width=\linewidth,alt={Straight-line ambient gradient-flow path from the reverse rank word to the sorted rank word.}]{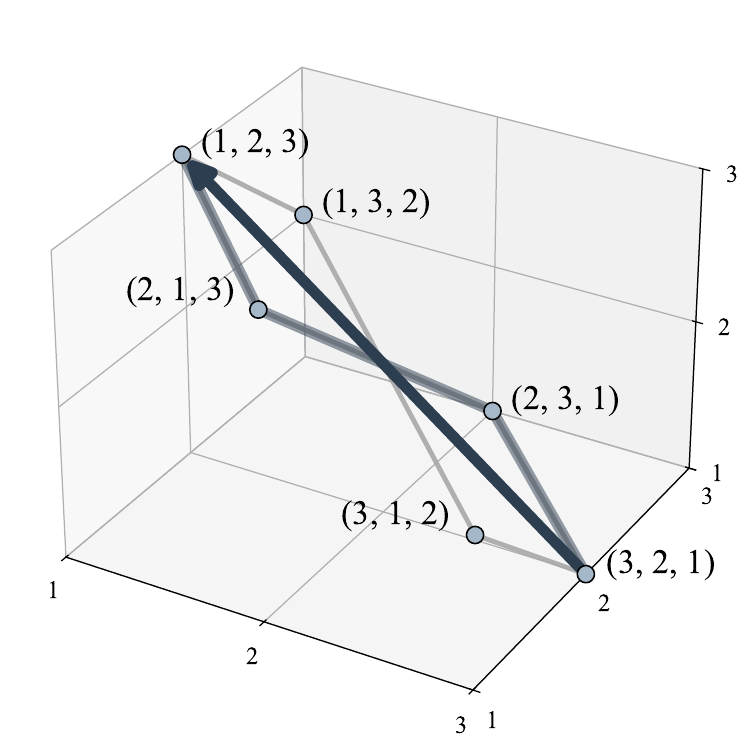}
    \end{minipage}
    \hfill
    \begin{minipage}[t]{0.49\linewidth}
        \centering
        \includegraphics[width=\linewidth,alt={Plot of exponential contraction of distance to the sorted vertex over time.}]{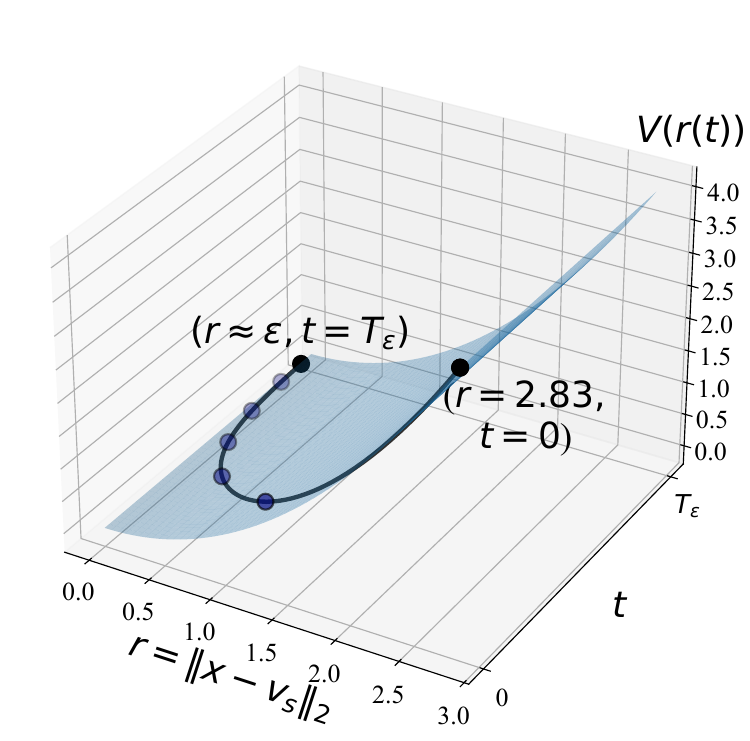}
    \end{minipage}
    \caption{Gradient-flow relaxation under
    \(V(x)=\tfrac12\|x-v_s\|_2^2\). \textbf{Left panel.} Straight-line descent on
    \(\mathcal{P}_3\) from \((3,2,1)\) to \((1,2,3)\), contrasted with the adjacent-swap
    path. \textbf{Right panel.} Exponential decay of
    \(r(t)=\|x(t)-v_s\|_2\) from \(r_0=\sqrt8\) to a threshold \(\varepsilon\).
    This Euclidean benchmark represents continuous metric relaxation, whose
    fixed-threshold time has the same asymptotic scale as normalized optimal comparison
    cost.}

    \label{fig:gradient_flow_permutohedron}
\end{figure}

\section{Conclusion}

Efficient algorithms overcome the factorial permutation space by exploiting structure.
From this perspective, the gain over local sorting appears as a contrast between local
motion and information. Adjacent-swap algorithms traverse the permutohedron
along its edges, resolving one inversion at a time and producing the classical
\(\Theta(n^2)\) behavior. Arbitrary comparisons act differently. They impose
global constraints on candidate rank maps, eliminating incompatible order types
and reaching the \(\Theta(n\log n)\) bound. Comparison information
therefore shrinks the feasible order-type set as a whole. This theme is consistent
with structured decomposition methods in combinatorial algorithms \cite{HopcroftTarjan}.
It also aligns with recent enumerative work of Harris, Kretschmann, and Mart{\'\i}nez Mori, where
QuickSort comparisons are counted by parking preference lists with \(n-1\)
``lucky cars'' \cite{harris2024}.

The continuous model adds the dynamical view. An ambient gradient flow on the
permutohedron represents sorting as a trajectory through space and clarifies the three
mechanisms at play. Adjacent swaps follow the edges of the \(1\)-skeleton, arbitrary
comparisons act through information constraints, and the Euclidean flow supplies a
smooth rank-displacement benchmark inside the same polytope. It reduces squared
distance to the sorted order according to
\begin{equation}
\|x(t)-v_s\|_2^2=\|x(0)-v_s\|_2^2e^{-2t}.
\end{equation}
Whenever \(0<\varepsilon<\operatorname{diam}(\mathcal{P}_n)\), the maximal
fixed-threshold relaxation time satisfies
\begin{equation}
T_\varepsilon^{\max}(n)
=\ln\frac{\operatorname{diam}(\mathcal{P}_n)}{\varepsilon},
\end{equation}
and therefore, for fixed \(\varepsilon>0\),
\begin{equation}
\dim(\mathcal{P}_n)T_\varepsilon^{\max}(n)=\Theta(n\log n).
\end{equation}
This is an intrinsic geometric factorization of the classical scale. The decision-tree
argument remains the computational proof of the comparison lower bound, while division
of optimal comparison cost by \(n\) gives the same logarithmic order as the relaxation.

Shannon's information-theoretic analysis made clear that search is governed by
available information \cite{Shannon}. Complexity theory develops this principle
across many computational settings \cite{arora2009}. Mann's perspective on search
emphasizes the way constraints organize rather than merely restrict a space
\cite{mann2017}. The present model gives this idea a concrete geometric form for
sorting. Factorial search collapses into log-linear time when comparisons supply
enough information to organize the candidate rank maps. In that sense, local traversal
and global structure meet inside the permutohedron, together with its comparison
half-spaces and continuous flow toward order. Informational
refinement and metric contraction thus emerge as distinct but complementary geometric
views of sorting. The permutohedron links them through the common asymptotic scale
encoded by its dimension and Euclidean diameter.

\begin{acknowledgment}
The author used software tools, including AI assistance, for figure preparation, algebraic checks, and editing. The author is responsible for all mathematical content and conclusions.
\end{acknowledgment}


\begin{flushleft}
\footnotesize
\textbf{Disclosure of interest.} The author reports no conflict of interest.\\[0.5em]
\textbf{Funding.} No funding was received for this research.
\end{flushleft}

\bibliographystyle{unsrtnat}
\bibliography{references}
\end{document}